\newtheorem{proof}{ \textbf{Proof}}
\newtheorem{theorem}{ \textbf{Theorem}}
\newtheorem{remark}{ \textbf{Remark}}
\begin{document}

\title{A Mobility-Aware Vehicular Caching Scheme in Content Centric Networks: Model and Optimization}

\author{Yao~Zhang, Changle~Li,~\IEEEmembership{Senior Member,~IEEE}, Tom H. Luan,~\IEEEmembership{Senior Member,~IEEE}, Yuchuan~Fu, Weisong Shi,~\IEEEmembership{Fellow,~IEEE}
 and Lina Zhu~\IEEEmembership{Member,~IEEE}

\thanks{Copyright (c) 2015 IEEE. Personal use of this material is permitted. However, permission to use this material for any other purposes must be obtained from the IEEE by sending a request to pubs-permissions@ieee.org.
\emph{(Corresponding author: Changle Li.)}}
\thanks{This work was supported in part by the National Natural
Science Foundation of China under Grants U1801266, 61571350 and 61601344, and in part by the Key Research and Development Program of Shaanxi under Contract 2017KW-004, Contract 2017ZDXM-GY-022, Contract 2018ZDXMGY-038, and Contract 2018ZDCXL-GY-04-02.}
\thanks{Y. Zhang, C. Li, Y. Fu, and L. Zhu are with the State Key Laboratory of Integrated Services Networks, Xidian University, Xi'an 710071, China
(e-mail: clli@mail.xidian.edu.cn).}
\thanks{T. H. Luan is with the School of Cyber Engineering, Xidian University,
Xi'an 710071, China (e-mail: tom.luan@xidian.edu.cn).}
\thanks{W. Shi is with the Department of Computer Science, Wayne State University, Detroit, MI 48202 (e-mail: weisong@wayne.edu).}
}

{}
\maketitle

\begin{abstract}

Edge caching is being explored as a promising technology to alleviate the network burden of cellular networks
by separating the computing
functionalities away from cellular base stations.
However, the service capability of existing caching scheme is limited by fixed edge infrastructure when facing the
uncertainties of users' requests and locations. The vehicular caching, which uses the moving vehicles as cache carriers,
is regard as an efficient method to solve the problem above.
This paper studies the effectiveness of vehicular caching scheme in content centric networks by developing optimization
model towards the minimization of network energy consumption. Particularly,
we model the interactions between caching vehicles and mobile users as a $2$-D Markov process, in order to characterize
the network availability of mobile users. Based on the developed model, we propose an online vehicular caching design by optimizing network energy efficiency. Specifically, the problem of caching decision making is firstly formulated as a fractional optimization
model, towards the optimal energy efficiency. Using nonlinear fractional programming technology and Lyapunov optimization theory, we
derive the theoretical solution for the optimization model. An online caching algorithm to enable the optimal vehicular caching
is developed based on the solution. Finally, extensive simulations are conducted to examine the performance of our
proposal. By comparison, our online caching scheme outperforms the existing scheme in terms of
energy efficiency, hit ratio, cache utilization, and system gain.

\end{abstract}

\begin{IEEEkeywords}
Vehicular caching, convex optimization, nonlinear fractional programming, Lyapunov optimization, energy efficiency
\end{IEEEkeywords}

\IEEEpeerreviewmaketitle

\section{Introduction}\label{SectionI}

The explosive growth of mobile data \cite{indexglobal} is
driving a myriad of novel services and applications, such as augmented/virtual reality, ultra-high definition video,
which make mobile users enjoy a fairly rich network experience. However,
the resulting data tsunami seriously challenges mobile operators worldwide in
their network performance \cite{yu2017mobile}, \textit{e.g.},
network capacity, Quality of Service (QoS), and energy consumption. Particularly,
a predication report from Cisco shows that there are about $49$ exabytes of monthly global mobile data with $11.6$ billion
of mobile devices in $2021$, increasing about sevenfold between $2016$ and $2021$ \cite{indexglobal}. The large-scale deployment of infrastructure is the simplest
method to accommodate titanic data requests in the future, which however incurs huge monetary costs.

A variety of novel and exciting techniques
have been studied to tackle the problem. By deploying
large-scale small cells, UDN (Ultra-Dense Network) has been stressed upon as a
promising technology for the next generation of mobile communication, aiming
to reduce the burden at MBSs (Macro Base Stations) \cite{chopra2018possible}.
A new computing paradigm called edge computing has been proposed with the push from cloud services and pull from
the voluminous and complex data, in order to provide the reliable data
processing at the network edge \cite{Shi2016Edge, hui2017content}. Data offloading
is an efficient method to improve the network experience of mobile users by
using the complementary technologies in $5$G systems, such as Wi-Fi, to
offload the mobile data originally targeted toward cellular networks
\cite{yu2017mobile}. By offering storage resource to the edge of network, the
edge caching is recently proposed to deal with the increasing data demand of
mobile users and balance the overload in cellular networks, in order to satisfy
the ultra-low latency requirement of next generation mobile networks
\cite{chen2018data}.

All above technologies focus on scheduling computation and communication
resources to achieve a better system utilization, targeting on the
improvement of network capacity.
Nevertheless, they fail to consider the uncertainties of mobile users, including locations and
requests, which will degrade the QoE (Quality of Experience) perceived by users due to the fixed service range and low utilization of mobile data.
Besides, the strong dependence on fixed infrastructure also limit the extension of service coverage.
The vehicular caching, which caches mobile data in vehicles ($e.g.$, taxis and buses) and uses the mobility of vehicles to improve the service range and capacity of caching, is a potential method.

In this work, we explore the energy-efficient caching services provided by vehicular
caching scheme in content-centric networks \cite{jacobson2009networking} for mobile users. The content-centric networking (CCN) is a communication architecture built on named data \cite{jacobson2009networking}. Compared with traditional networks that built on named hosts, CCN requires less backhaul energy at the cost of
caching energy in wireless caching
networks, which thus contributes to the construction of energy-efficient wireless networks.
On the other hand, as a key component of $5$G network, the
vehicular communication techniques enabled by IEEE 802.11p or LTE (Long Term Evolution) have been widely studied in recent years, including
communication protocols \cite{zhu2016geographic, zhu2015on, wang2012multi},
multi-hop backbone communications \cite{li2018building}, big date driven vehicular social networks
\cite{cheng2018bigdate, luan2015social}, and traffic safety \cite{fu2018infrastructure}. Most of these researches aim to deal with the highly-dynamic network topology caused by
the mobility of vehicles. The extensive interactions between moving vehicles and mobile users provide a positive condition to improve the
service range of vehicular caching scheme. However, this natural feature has not yet attracted wide attention.
Therefore, using the mobility of vehicles to provide mobile users with fast and reliable data access in CCN becomes our motivation.

Different from the existing works above, three advantages is achieved by the mobility-aware vehicular caching
scheme in CCN. 1) \emph{Cost-saving deployments}: By caching the mobile data in moving vehicles, the large-scale
deployment of infrastructure is reduced. The cost
input in both operators and users thus would be significantly decreased. 2) \emph{Enhanced service capacity}: Using
the mobility of caching vehicles, the service capacity of caching is enhanced, including the service range and cache utilization. This is because that users will encounter multiple caching vehicles in a short time, thus the caching vehicles
may perform multiple times services after caching data one time. 3) \emph{Energy-efficient updating}: Based on CCN,
the cached mobile data is managed by naming information, the overhead of cache updating thus can be reduced. Besides, real-time
V2V communications can be used to share the caching data, which also decreases the overhead of backhaul.
To achieve above goals, the optimal caching decision in caching vehicles becomes the major object. In this
paper, we first explore the relationship of caching vehicles and mobile users. Based on the interactions between moving vehicles
and mobile users, a vehicular caching design in CCN is proposed to improve the network energy efficiency.
We proceed in three steps:

\begin{itemize}
\item \emph{Modeling}: We formulate the interactions between caching vehicles and
mobile users as a $2$-D Markov process, in order to
characterize the network availability of mobile users. This is the first step to
incorporate the mobility of vehicles, and to provide more flexible and wider caching services for mobile users.
Based on this model, the service probability of mobile users can be obtained.

\item \emph{Designing}: Based on the developed model above, we propose a vehicular
caching scheme by caching mobile data in vehicles.
To make the optimal caching decision, we first
formulate the network energy efficiency in vehicular caching as a fractional optimization problem, and then explore the solution by incorporating fractional programming technology and Lyapunov optimization theory.
Based on the solution, a novel online algorithm is proposed
to ensure the energy efficiency oriented vehicular caching.


\item \emph{Validations}: Finally, our proposal is evaluated by extensive
simulations. Simulation results show that our scheme achieves a better
performance in terms of energy efficiency, hit ratio, cache utilization and system gain.
\end{itemize}

The remainder of this paper is structured as follows: Section \ref{Section II} presents a
briefly survey about the existing works related to our study. Section \ref{Section III}
illustrates the system model of our research.
Section \ref{Section V} firstly formulates the vehicular caching as an optimization problem and then
introduces the designed online vehicular caching scheme. Section \ref{Section VI}
conducts the performance evaluation about our proposal while Section \ref{Section VII} closes
our paper with conclusion.

%

\section{Related Works}\label{Section II}

Edge caching has been extensively studied in recent years. In this section, we briefly survey existing literature in edge caching from the
perspective of access networks \cite{liu2016energy}, coding \cite{gabry2016energy, ji2017order}, prediction \cite{song2017learning, zhao2018mobility}, and
vehicular communications \cite{vigneri2017per, Vigneri2017Quality}.

Liu \emph{et al.}\cite{liu2016energy} study the performance analysis of typical BS-assisted
caching networks. The factors that impact caching performance in cache-enabled wireless access
networks, including interference, backhaul capacity, BS density, and cache capacity, are investigated to efficiently deploy BS cache.
To improve the utilization of cache storage, coding based edge caching is a potential research direction in recent years. Gabry \emph{et al.} \cite{gabry2016energy} explore the
impact of MDS (Maximum-Distance Separable) coding on the energy efficiency performance of edge caching, in order to minimize the backhaul rate and
the total energy consumption. Ji
\emph{et al.} \cite{ji2017order} propose a coded distributed caching system for the canonical shared link caching network based on linear index
coding. They conclude that caching file fragments rather than full files will obtain a better performance.
\begin{table}[t]
\centering
 \caption{\label{TABLEParameters}}
 \begin{tabular}{ll}
  \toprule
  Notations & Descriptions \\
  \midrule
 $\digamma$ & Set of files \\
 $B$ & Fragment size\\
 $W$ & System bandwidth\\
 $q_{j}$ & Cache decision of $F_{j}$\\
 $p_{j}$ & Request probability of $F_{j}$\\
 $\phi$ & Zipf exponent\\
 $N_{f}$ & Number of fragments \\
 $N_{v}$ & Number of vehicles\\
 $N_{u}$ & Number of users\\
 $R$ &  Radius of cellular cell\\
 $\mu$ & Mean service rate of cellular network\\
 $\nu$ & Mean service rate of caching vehicles\\
 $\lambda$ & Mean arrival rate of users' requests\\
 $\xi$ &  Mean inter-meeting rate between users and \\
 &caching vehicles \\%
 $\omega$ & Tolerant time of requests \\
 $\sigma^{2}$ & Noise power \\
  \bottomrule
 \end{tabular}
\end{table}
Another important problem in caching is the uncertainty of users' requests, which poses a serious impact on hit ratio edge caching. Song \emph{et al.}
\cite{song2017learning} propose a MAB (multi-armed bandit) based content caching and sharing scheme, in order
to profile the unknown content popularity and make the caching more efficient. Focusing on the scenario of vehicular communications, Zhao \emph{et al.}
\cite{zhao2018mobility} design a multi-tier caching mechanism based on a novel hybrid Markov model to predict the connection of
vehicles and RSUs. As such, the content offloading
in RSU can be optimized.

All above researches focus on improving the performance of the edge caching scheme/algorithm. However, they fail
to consider the limited service range and cache utilization due to the strong dependence on fixed infrastructure.
Different from these works, we use the moving vehicles as cache carriers, called vehicular caching.
We shift vehicles from service consumers
to service providers, in order to achieve a higher flexibility to provide caching services. On the other hand, mobile users in
traditional caching schemes have to move to encounter different cache carriers, resulting in the degradation of QoE and also causing low hit ratio. This drawback can also be overcome by vehicular caching because caching vehicles can encounter and serve multiple mobile users continuously after updating the cache one times.

In recent years, Vigneri \emph{et al.} are devoted to the studies of vehicular caching \cite{vigneri2017per,
Vigneri2017Quality}. They highlight the strength of vehicular communications based on the comparison between the caching in vehicular networks and cellular networks. In this work, we further develop an online vehicular caching scheme and explore the improved performance in hit ratio, energy efficiency, cache utilization and system gain. Particularly, we explore the performance enhancement by adopting named data information and D2D communications in vehicular caching scheme, which is the first work to do this research to the best of our knowledge.

\section{System Model}\label{Section III}
This section elaborates on the system model of our study. We first describe the network scenario, and then introduce the
communication model based on D2D communications. Before modeling the interactions of caching vehicles and mobile users, we
present the energy consumption models to evaluate the impact of
vehicular vehicles on the total energy consumption. The notations in our analysis are listed in TABLE \ref{TABLEParameters}.

\subsection{Network Scenario}

\begin{figure}[t!]
\centering
\includegraphics[width=3.5in]{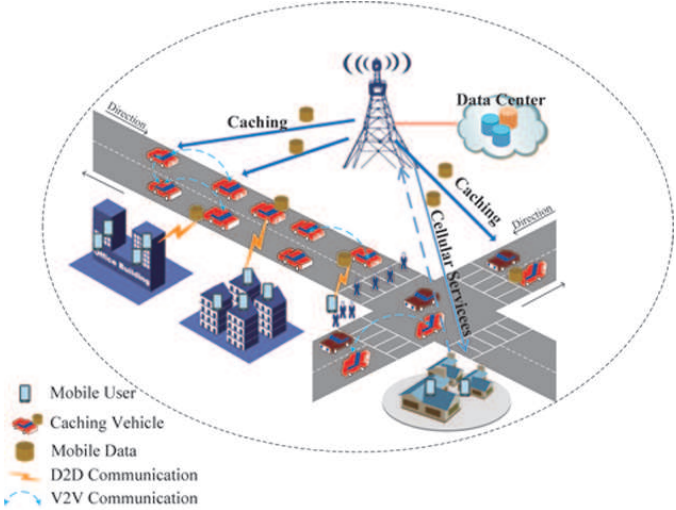}
\centering
\caption{Network Scenario}%
\label{SystemScenario}%
\end{figure}

In our study, we consider an urban scenario within the coverage of a cellular MBS,
as shown in Fig. \ref{SystemScenario}.
Different from traditional cellular networks,
we focus on an edge caching CCN where the requests of users are served by either MBS or caching
vehicles, in order to avoid the system performance degradation incurred by
the rapid increasing of mobile users and data demands.
Nowadays, the complex and large-scale road
systems make the meetings between mobile users and moving vehicles
frequent. As such, the real-time communications between mobile users and moving vehicles
can directly take place with the support of IEEE 802.11p or LTE protocol as long
as their distance is within the valid communication range. We apply D2D (Device-to-Device) communication technology to support
data transmissions between caching vehicles and mobile users because of its
strength in resource utilization and network throughput \cite{zhang2015interference}.
The communication modes of mobile users are determined by network environments.
Since this work focuses on the solution when or before the network burden takes place,
we assume that in this case the communications with caching vehicles act as
the preference setting of mobile users. Two types of data packets are involved in
CCNs, $i.e.,$ Interest packets and Data packets \cite{jacobson2009networking}.
The same content is identified by matching the names in both Interest packets and Data packets.
Note that the name of each packet is an opaque and binary object to networks. As such,
the sources of packets (in public naming conventions) even packet names (in private naming conventions)
cannot be obtained by other network members, such as routers, in order to protect the privacy of users
and transmission security \cite{jacobson2009networking}. Therefore, multiple users interested the
same content can share the transmissions from caching vehicles. Specifically,
each user sends Interest packets involving
the naming information of requested content to nearby vehicles by broadcasting.
Any vehicle that receives the Interest packets and has the requested content in the cache storage directly deliveries the Data packets to the user. Otherwise,
the vehicle sends these pending Interests to surrounding vehicles and informs the user about the pending time.
The pending time is determined by the distance between the vehicle caching requested content and the user.
If the pending time exceeds the user's tolerance time, the request will be switched to MBS.
Therefore, there are three types of communication nodes in our
scenario, \textit{i.e.}, MBS, mobile users, and caching vehicles. Mobile users
can access network services either from the always-on MBS by cellular communications or
from the opportunistic caching vehicles by D2D communications. It should be noted that the network
throughput is improved by caching mobile data in vehicles while the energy consumed by caching is accordingly increased. Since the network performance
varies over time, we start our analysis by assuming networks operate in
slotted time, \textit{i.e.}, time slot $t$ represents the time interval
$[t,t+1)$, $t\in\left\{  0,1,2,...\right\}  $. Other assumptations in
analysis, which are simple but can capture the basic elements, are shown as follows.

\begin{itemize}
\item \emph{Mobile users}: We incorporate all the users that request cellular services within the service range of MBS, $i.e.$,
mobile users. To characterize the spatial distribution of mobile users is the first step to analyze the connections
among mobile users, MBS, and moving vehicles. The Poisson point process (PPP), which is a classical and reliable method to
model spatial distribution, has been widely used in the analysis of cellular networks. Many existing prediction methods, such
as Markov based mobility prediction algorithm \cite{zhao2018mobility}, cannot be directly applied in our scenario since it is
difficult to get
specific mobility traces of mobile users in the two-dimensional plane. Therefore, by referring to \cite{li2014throughput,
liu2016energy},\ we use the homogeneous Poisson point process (PPP) to characterize the specific
distribution of mobile users in a given range. For instance,
assuming the mean rate of PPP is $\lambda_{u}$, the probability that there are
$N_{u}$ users in the cell is obtained as $\frac{\lambda_{u}^{N_{u}}}{N_{u}%
!}e^{-\lambda_{u}}$.

\item \emph{Caching vehicles}: We assume the contact times between a user and
the vehicles caching the requested data follow the exponential distribution
with mean rate $\lambda_{v}$ \cite{vigneri2017per}\cite{karagiannis2010power}.

\item \emph{Data catalogue}: We denote the requested mobile data catalogue by a set of
file fragments with same size, ${i.e.}$, $\mathit{\digamma}$%
=$\{F_{1},F_{2},...F_{N_{f}}\}$. It should be noted that the assumption is
reasonable since files can be divided into multiple fragments with same size
\cite{vigneri2017per, golrezaei2013femtocaching}. Besides, the approach that
caches the complete file or none of it may not achieve the optimal
network performance \cite{vigneri2017per}. All file fragments have
different requested probabilities, denoted by the vector $\mathbf{p}%
=\{p_{1},...p_{N_{f}}\}$. We apply the
widely used Zipf distribution model to compute the popularity of fragments,
\textit{i.e.}, the request probability of the fragment $F_{j}$ is%
\begin{equation}
p_{j}=\frac{1}{\left(  \sum_{j=1}^{_{N_{f}}}1/j^{\phi}\right)  j^{\phi}},
\end{equation}
where $\phi$ is the Zipf exponent and its specification can be referred to
\cite{breslau1999web, gabry2016energy}. Let $q_{j}$, $j\in\left\{
1,2,...N_{f}\right\}  $ denote the probability that fragment $j$ is cached in
vehicles. The caching probability vector $\textbf{q}=\{q_{1},q_{2}%
,...q_{N_{f}}\}$ then\ can be\ formed.
\end{itemize}

%

\subsection{Communication model}\label{communicationmodel}

In this part, we describe D2D-enabled communication models in vehicular caching scenario.
Many works has been on applying D2D communications in vehicular networks, in
order to enhance the network performance of traditional ad-hoc vehicular
networks \cite{ren2015power}, \cite{liang2017resource}.
Besides, V2P (Vehicle-to-Pedestrian) communications are considered as a
similar communication mode with V2V excepted the limited power consumption of
pedestrian user according to \cite{yang2017high}. We denote by $h_{i,j}$ as
the channel gain from the transmitter $i$ to the receiver $j$, and assume a case that
caching vehicles reuse the downlink channel. According to \cite{zhang2015interference, liu2014will}, the instantaneous SINR
(signal-to-interference-plus-noise ratio) for the user $k$ served by MBS is%

\begin{equation}\label{SINR_B}
\gamma_{k}^{B}=\frac{P_{B}^{t}h_{B,k}}{\sigma^{2}+\sum_{n=1}^{N_{v}%
}\varepsilon_{n,k}P_{v,k}^{t}h_{n,k}},
\end{equation}
where $P_{B}^{t}$\ is the transmit power of MBS, and $h_{B,k}$\ denotes the
channel gain of the wireless propagation channel between MBS and the user $k$. The part of $\sum_{n=1}^{N_{v}%
}\varepsilon_{n,k}P_{v,k}^{t}h_{n,k}$ shows the interference from the vehicles
reusing the same channel with the downlink of MBS. $N_{v}$ is the number
of vehicles. $\varepsilon_{n,k}=1$ denotes that the vehicle $n$ reuses the link of
the user $k$. Similarly, the SINR for the communication link between the caching vehicle and
the user $k$ is
\begin{equation}\label{SINR_V}
\gamma_{k}^{v}=\frac{P_{v,k}^{t}h_{v,k}}{\sigma^{2}+P_{B,k}^{t}h_{B,k}%
+\sum_{n=1,n\neq v}^{N_{v}}\varepsilon_{n,m}P_{v,k}^{t}h_{n,k}},
\end{equation}
where $P_{v,k}^{t}$ is the transmit power of the caching vehicle, and
$P_{B,k}^{t}h_{B,k}$ is the interference at user $k$ caused by the MBS. Similarly with (\ref{SINR_B}),
the part of $\sum_{n=1,n\neq v}^{N_{v}}\varepsilon_{n,k}P_{v,k}^{t}h_{n,k}$ shows the interference from the vehicles reusing the same channel with the transmitting vehicle.

Following \cite{ren2015power} and \cite{cheng2015d2d}, we only consider the
large-scale fading phenomenon when computing $\gamma_{k}^{v}$. Note that, the
methods of resource allocation and channel reuse to alleviate the interference
of D2D-enabled vehicular communications have been widely studied, so that we assume
the ideal interference management.

For V2V communications, the unit disk model is used to grossly capture the fact that communication performance
depends on the transmission range of two vehicles. According to this model, the communication can directly take
place when the Euclidean distance of a pair of vehicles is within the valid communication range \cite{chen2017throughput}.

\subsection{Energy Consumption Model}\label{energymodel}

Constructing green wireless networks is a long-term target not only for traditional mobile communication networks but for next
generation vehicular networks, which makes the energy efficiency in vehicular communications concerned, especially with the popularization of electric vehicles \cite{zhang2018real}.
It should be noted that, using the caching
scheme may reduce the transport energy consumption from MBS at the expense of
the increase of caching energy. Therefore, our goal is to find the optimal
caching solution by taking account of the increased throughput and energy
consumption. To develop the energy consumption model in vehicular caching scheme, we only consider the
energy consumption that can be impacted by the caching
policy, \textit{i.e.}, transport energy consumed by MBS, transport energy and caching
energy consumed by caching vehicles by simplifying the energy consumption model in traditional cellular networks\cite{arnold2010power}.

\begin{itemize}
\item \emph{Energy consumption for MBS}. Let $S$ denote the coverage area of a MBS. To denote the transport
energy consumed by MBS, we apply the linear energy consumption model \cite{gabry2016energy, choi2012network} as
\begin{equation}
P_{m}(t)=R_{m}(t)\omega_{t}^{m},
\end{equation}
where $\omega_{t}^{m}=0.5 * 10^{-8}$ J/bit denotes the rate of energy consumption for the
transmissions from MBS\cite{gabry2016energy}, and $R_{m}(t)$ is the total data served by MBS in time
slot $t$.

\item \emph{Energy consumption for caching vehicles}. The energy consumped by
caching vehicles consists of two parts, \textit{i.e.}, transport energy and
caching energy. Considering the backhaul transmissions from MBS to caching vehicles, we
show the energy consumption for caching vehicles as%
\begin{equation}
P_{cv}(t)=P_{t}^{v}(t)+P_{ca}(t)+P_{bh}(t),
\end{equation}
where $P_{t}^{v}$, $P_{ca},$ and $P_{bh}$ are transport energy, caching energy
and backhaul energy, respectively \cite{liu2016energy, gabry2016energy,
choi2012network}. Specifically, $P_{t}^{v}$ is a function of the transmit
power of caching vehicles, shown as
\begin{equation}\label{enegymodelequarity13}
P_{t}^{v}(t)=\zeta_{v}P_{v,k}^{t}(t),
\end{equation}
where $\zeta_{v}$\ is a simplified impact parameter for power amplifier
cooling and power supply. The energy-proportional model is used to represent
the caching energy, shown as
\begin{equation}
P_{ca}(t)=R_{v}(t)\omega_{c},
\end{equation}
where $\omega_{c}=6.25*10^{-12}$ W/bit is the caching factor for high-speed SSD devices \cite{liu2016energy}. Also, $P_{bh}$ can be obtained as%
\begin{equation}
P_{bh}(t)=R_{v}(t)\omega_{t}^{m},
\end{equation}

\item \emph{Total Energy consumption}. Based on the analysis above, the total energy consumed in the vehicular
caching network is%
\begin{equation}
P_{tot}(t)=P_{m}(t)+P_{cv}(t).
\end{equation}

\end{itemize}

\subsection{Interactions of caching vehicles and mobile users}

\begin{figure}[t!]
\centering
\includegraphics[width=3.3in]{vehicleconnectivitystate.eps}
\caption{$2$-D Markov process for the interactions of vehicles and users}%
\label{2_DMarkov}%
\end{figure}

In this part, we develop a $2$-D Markov process \cite{altman2006analysis}\cite{perel2010queues} to model the interactions between caching vehicles
and mobile users.
In Fig. \ref{2_DMarkov}
, let $J$ denote the set of requests
in the user request queue and $K$ denote the network connection condition. Let
$P_{kj}=P\{K=k,J=j\}$ denote the stationary probability that request $j$ is
served by network connection $k$ ($k=0$ implies the request is served by the
cellular network, while $k=1$ denotes the D2D vehicle-user connection is working).
Since our purpose is to alleviate the network burden, the network states
in Fig. \ref{2_DMarkov}
cannot be shifted from \emph{K=1} to \emph{K=0}. This is because
that the state transition from \emph{K=0} to \emph{K=1} takes place when mobile users sense the
performance degradation of cellular networks. In this case, the communications with caching
vehicles will be the default setting in mobile users and the cellular networks assist to
provide services only when the vehicles that cache the requested data do not come or cannot develop connections
with mobile users within the users' tolerance time.

The state balance equations of the Markov process are given:%
\begin{equation}
K=1:\left\{
\begin{array}
[c]{c}%
J=1:(\lambda+\nu)P_{11}=\nu P_{12}+\xi P_{01}\\
J\geq2:(\lambda+\nu)P_{1n}=\lambda P_{1,n-1}+\nu P_{1,n+1}\\
+{\xi}P_{0n},
\end{array}
\right.  \label{BalanceEqu1}%
\end{equation}%
\begin{equation}
K=0:\left\{
\begin{array}
[c]{c}%
J=0:\lambda P_{00}=\nu P_{11}+\omega P_{01}\\
J\geq1:(\lambda+n\omega+\xi)P_{0n}=\lambda P_{0,n-1}\\
+(n+1)\omega P_{0,n+1},
\end{array}
\right.  \label{BalanceEqu2}%
\end{equation}
where $\lambda$ is the mean arrival rate of users' requests following Poisson
distribution, and $\nu$ is the mean service rate of caching vehicles. The
inter-meeting times between a user and the vehicles caching the requested
files follow exponential distribution with average rate $\xi$. For different users,
the inter-meeting times are independent and identically distributed random variables. Besides, the
tolerant time of users also follows exponential distribution with mean rate
$\omega$.

We further define the Probability Generating Functions (PGFs) as
$G_{0}(z)=\sum_{n=0}^{\infty}P_{0n}z^{n}$, and $G_{1}(z)=\sum_{n=1}^{\infty
}P_{1n}z^{n}$. Specifically, multiplying the two sides of both (\ref{BalanceEqu1}) and (\ref{BalanceEqu2}) by $z^{n}$ and summing over them, respectively, we obtain%
\begin{equation}
G_{0}^{\prime}(z)=\frac{(\lambda z-\lambda-\xi)G_{0}(z)+\xi P_{00}+vP_{11}%
}{\omega z-\omega}. \label{EquG0}%
\end{equation}
\begin{equation}
G_{1}(z)=\frac{\xi zG_{0}(z)-z\nu P_{11}-\xi zP_{00}}{\lambda z+vz-v-\lambda
z^{2}}, \label{EquG1}%
\end{equation}

By omitting the detailed derivation, we obtain $E[L_{0}]=G_{0}^{\prime
}(1)=\frac{\lambda\nu-\lambda^{2}}{\nu\omega+\upsilon\xi-\lambda\omega}$ and
$E[L_{1}]=G_{1}^{\prime}(1)=\frac{\xi\lambda}{\nu\omega+\upsilon\xi
-\lambda\omega}$, which are the number of requests served by cellular links
and vehicle-user links, respectively. Therefore, the probability that
a user is served by caching vehicles can be obtained as%

\begin{equation}
\kappa_1=\frac{E[L_{1}]}{E[L_{0}]+E[L_{1}]}.
\end{equation}

Also, the probability that serviced by MBS is
\begin{equation}
\kappa_0=\frac{E[L_{0}]}{E[L_{0}]+E[L_{1}]}.
\end{equation}
The detailed calculation for $E[L_{0}]$ and $E[L_{1}]$ is shown in Appendix A.
Further, given a set of mobile users, the probability that there are $n$ users are served by caching
vehicles can be calculated as
\begin{equation}
P_{s,v}\{n\}=\binom{N_u}{n}\kappa_1^n(1-\kappa_1)^{N_u-n}.
\end{equation}
Similarly, we can obtain the probability that ${n}$ users are served by MBS as
\begin{equation}
P_{s,m}\{n\}=\binom{N_u}{n}\kappa_0^n(1-\kappa_0)^{N_u-n}.
\end{equation}

\section{Online Vehicular Caching Scheme}\label{Section V}

In this section, in order to obtain the optimal caching decisions, we first formulate vehicular caching into a
fractional optimization model towards the minimization of network energy efficiency. Using the nonlinear programming
technology and Lyapunov optimization theory, we then explore the
solution of the developed non-convex problem. Finally, we propose an online caching algorithm to achieve
the online vehicular caching.

\subsection{Problem Formulation}

Based on the analysis in \ref{communicationmodel}, the total network throughput in time slot $t$ based on Shannon's formula can be obtained as
\begin{equation}
\begin{aligned} R_{tot}(t)&=R_{m}(t)+R_{v}(t) \\ &=W\sum_{k=1}^{N_{u}}[ kP_{s,m,k}(t)\sum_{j=1}^{N_{f}}(q_{j}(t)|p_{j}(t))\log_{2}(1+\gamma_{k}^{m}(t)) \\
&+{k}P_{s,v,k}(t)\sum_{j=1}^{N_{f}}((1-q_{j}(t))|p_{j}(t))\log_{2}(1+\gamma_{k}^{v}(t))], \end{aligned}
\end{equation}where $R_{m}(t)$ and $R_{v}(t)$ are the throughput served by MBS and caching vehicles, respectively. Besides, $P_{s,m,k}(t)$ and $P_{s,v,k}(t)$ denote $P_{s,v}\{k\}$ and $P_{s,v}\{k\}$ at time slot t, respectively.
$q_j(t)$ denotes the caching probability of fragment $j$ at time slot $t$.
We adopt the same system bandwidth for both vehicle-user links and cellular links. This is typical in existing researches related with the coexistence of two kinds of
links, such as in D2D-enabled vehicular communications \cite{liang2017resource}, LTE-based V2X communications \cite{3rdgeneration2016project}, and D2D relay networks \cite{zhang2018social}.

To obtain the optimal caching policy, we focus on the energy efficiency of
networks and formulate a fractional optimization problem. From a long-term
perspective, the optimization model of network energy efficiency is


\begin{align}\label{EEOptimizationModel}
\min\eta_{EE}  &  =\underset{t\rightarrow\infty}{\lim}\frac{\frac{1}{t}%
\sum_{\tau=0}^{t-1}P_{tot}(\tau)}{\frac{1}{t}\sum_{\tau=0}^{t-1}R_{tot}%
(\tau)}=\frac{\overline{P}_{tot}}{\overline{R}_{tot}}\\
\text{s.t. C1}  &  \text{: }\overline{D}_{n}\leq D_{av}\text{, }\forall
n\in\{1,...,N_{u}\},\nonumber\\
\text{C2}  &  \text{: }\sum_{j=1}^{N_{j}}q_{j}(t)B\leq S_{cv},\nonumber\\
\text{C3}  &  \text{: }0\leq q_{j}(t)\leq1\text{, }\forall j\in\{1,...,N_{f}%
\},\nonumber
\end{align}
where $\overline{D}_{n}$ is the time average\ expectation of response time that users experience,
and the constraint C1 is to guarantee the stability of user queue with data arrival.
Considering users' requirements in QoE, we assume $D_{av}=\omega$, where
$\omega$ is the mean delay tolerance of users. $S_{cv}$ is the maximum storage
capacity of each caching vehicle. The vector $\mathbf{q=\{}q_{1}(t),q_{2}%
(t),q_{3}(t),...q_{j}(t)\mathbf{\}}$ denotes the caching decisions for
mobile data. C2 is to limit the total caching capacity of caching vehicles.

\subsection{Problem Solution}

\subsubsection{Transformation}

It is obvious that the above optimization problem is nonconvex. As such, we
first transform the above fractional and tough nonconvex problem into a linear and
convex one based on the nonlinear fractional programming technology
\cite{dinkelbach1967nonlinear}.

To make the transformation, we have the following theorem.

\begin{theorem}
The problem of $\min\eta_{EE}$ can be equivalently transformed to
$\min\overline{P}tot-\eta_{EE}^{opt}(t)\overline{R}_{tot}$ \label{theorem1}
\end{theorem}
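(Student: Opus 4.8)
The plan is to apply the Dinkelbach transformation for nonlinear fractional programming. The key idea is that minimizing a ratio $\eta_{EE} = \overline{P}_{tot}/\overline{R}_{tot}$ over a feasible set is equivalent to finding the root of an auxiliary parametric function, which converts the fractional objective into a sequence of linear (in the numerator/denominator) subproblems. First I would define the optimal value $\eta_{EE}^{opt}$ as the minimum achievable energy efficiency over the feasible region defined by constraints C1--C3, i.e. $\eta_{EE}^{opt} = \min \overline{P}_{tot}/\overline{R}_{tot}$, attained at some optimal caching vector $\mathbf{q}^{*}$. Since $\overline{R}_{tot} > 0$ throughout the feasible set (positive throughput), the ratio is well-defined.

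The core of the argument rests on establishing the two-directional implication characterizing the optimum. I would introduce the auxiliary function $T(\eta) = \min_{\mathbf{q}} \left\{ \overline{P}_{tot} - \eta\,\overline{R}_{tot} \right\}$ over the same feasible set and show that $\eta_{EE}^{opt}$ is precisely the unique value of $\eta$ for which $T(\eta) = 0$. The forward direction observes that for any feasible $\mathbf{q}$, the definition $\eta_{EE}^{opt} \le \overline{P}_{tot}/\overline{R}_{tot}$ gives $\overline{P}_{tot} - \eta_{EE}^{opt}\,\overline{R}_{tot} \ge 0$, with equality exactly at the optimal $\mathbf{q}^{*}$; hence $T(\eta_{EE}^{opt}) = 0$ and the minimizer of $\overline{P}_{tot} - \eta_{EE}^{opt}\,\overline{R}_{tot}$ coincides with the minimizer of the original ratio. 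The reverse direction notes that any $\mathbf{q}$ achieving $\overline{P}_{tot} - \eta_{EE}^{opt}\,\overline{R}_{tot} = 0$ satisfies $\overline{P}_{tot}/\overline{R}_{tot} = \eta_{EE}^{opt}$ and is therefore optimal for the fractional problem. Together these show the two problems share the same optimal solution, which is the claimed equivalence.

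To make the root characterization rigorous I would verify the structural properties of $T(\eta)$ that underlie Dinkelbach's method: namely that $T(\eta)$ is continuous, strictly monotonically decreasing in $\eta$ (since $\overline{R}_{tot} > 0$, increasing $\eta$ strictly decreases $\overline{P}_{tot} - \eta\,\overline{R}_{tot}$ pointwise, hence also its minimum), and convex as a pointwise minimum of affine functions of $\eta$. These properties guarantee the root of $T(\eta) = 0$ is unique, so $\eta_{EE}^{opt}$ is unambiguously determined and the equivalence is exact rather than merely one-sided. This monotonicity is what permits the iterative scheme in which $\eta_{EE}^{opt}(t)$ appears as the updated parameter, matching the $t$-dependent notation in the statement.

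The main obstacle I anticipate is not the algebra of the transformation, which is the standard Dinkelbach argument, but justifying the existence and attainment of the optimum in the \emph{time-average, long-horizon} setting. The objective involves limits $\lim_{t\to\infty}\frac{1}{t}\sum_{\tau=0}^{t-1}(\cdot)$ coupled to the queue-stability constraint C1, so strictly speaking $T(\eta)$ is a long-run-average control problem rather than a static optimization over $\mathbf{q}$. I would therefore argue that, because the feasible set for $\mathbf{q}(t)$ per slot is compact (C3 gives $0 \le q_j \le 1$ and C2 is a closed linear constraint) and $\overline{P}_{tot}, \overline{R}_{tot}$ are continuous functionals of the policy, a minimizing policy exists and the parametric value $\eta_{EE}^{opt}(t)$ is well-defined at each stage; the subsequent Lyapunov optimization machinery then handles the stochastic time-average constraint C1 when the transformed objective $\overline{P}_{tot} - \eta_{EE}^{opt}(t)\overline{R}_{tot}$ is driven to its per-slot minimum. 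I would flag that the equivalence theorem as stated is cleanest when interpreted at the level of achievable time-average pairs $(\overline{P}_{tot}, \overline{R}_{tot})$, and that the $t$-indexing of $\eta_{EE}^{opt}$ signals the iterative Dinkelbach update rather than a genuinely time-varying target.
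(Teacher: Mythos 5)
Your proposal is correct and follows essentially the same route as the paper: the standard Dinkelbach necessity/sufficiency argument built on the inequality $\eta_{EE}^{opt}\leq \overline{P}_{tot}(\mathbf{q})/\overline{R}_{tot}(\mathbf{q})$, showing the subtractive problem has optimal value zero attained exactly at the minimizers of the ratio. Your additional discussion of the auxiliary function $T(\eta)$ and the time-average subtlety goes beyond what the paper establishes (one minor slip: a pointwise minimum of affine functions of $\eta$ is concave, not convex, though uniqueness of the root already follows from the strict monotonicity you correctly derive).
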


\begin{proof}
To prove Theorem \ref{theorem1}, we assume that $\mathbf{q}^{\ast}(t)$ is the
optimal caching decision vector at time slot $t$. We now prove that at any time slot, when
$\mathbf{q}^{\ast}$ is the solution of one of the two minimization problems, it
must be the solution of the other one. Specifically, we divide the
proof into two parts, \textit{i.e.}, necessity proof and sufficiency proof.

The necessity proof is to prove that $\mathbf{q}^{\ast}$ is the solution of
$\min\overline{P}tot-\eta_{EE}^{opt}\overline{R}_{tot}$ because it is the solution
of $\min\eta_{EE}$.

Specifically, since $\mathbf{q}^{\ast}$ is the optimal solution of
optimization problem (\ref{EEOptimizationModel}), we have%
\begin{equation}
\eta_{EE}^{opt}=\frac{\overline{P}_{tot}(\mathbf{q}^{\ast})}{\overline
{R}_{tot}(\mathbf{q}^{\ast})}\leq\eta_{EE}=\frac{\overline{P}_{tot}%
(\mathbf{q})}{\overline{R}_{tot}(\mathbf{q})}, \label{EE_equation18}%
\end{equation}
where $\mathbf{q}^{\ast}\in\mathbf{q}$.
We further transform (\ref{EE_equation18}) to%
\begin{equation}
\label{equality_star}\overline{P}_{tot}(\mathbf{q}^{\ast})-\eta_{EE}%
^{opt}\overline{R}_{tot}(\mathbf{q}^{\ast})=0,
\end{equation}%
\begin{equation}
\overline{P}_{tot}(\mathbf{q})-\eta_{EE}^{opt}\overline{R}_{tot}%
(\mathbf{q})\geq0,
\end{equation}%
\begin{equation}
\overline{P}_{tot}(\mathbf{q})-\eta_{EE}\overline{R}_{tot}(\mathbf{q})=0.
\end{equation}
Therefore, we can obtain the following equation.%
\begin{align}
&  \min\overline{P}_{tot}(\mathbf{q})-\eta_{EE}^{opt}\overline{R}%
_{tot}(\mathbf{q})\label{min_transformation0}\\
&  =\overline{P}_{tot}(\mathbf{q}^{\ast})-\eta_{EE}^{opt}\overline{R}%
_{tot}(\mathbf{q}^{\ast})\nonumber\\
&  =0.\nonumber
\end{align}
The proof for the necessity of Theorem \ref{theorem1} is completed.

For sufficiency proof, we aim to prove that $\mathbf{q}^{\ast}$ is the
solution of problem (\ref{EEOptimizationModel}) with the premise that
$\mathbf{q}^{\ast}$ is the solution of $\min\overline{P}tot-\eta_{EE}^{opt}\overline{R}_{tot}$.

Assuming $\mathbf{q}^{\ast}$ is the solution of (\ref{min_transformation0}),
we have
\begin{align}
&  \min\overline{P}_{tot}(\mathbf{q})-\eta_{EE}^{opt}\overline{R}%
_{tot}(\mathbf{q})\\
&  =\overline{P}_{tot}(\mathbf{q}^{\ast})-\eta_{EE}^{opt}\overline{R}%
_{tot}(\mathbf{q}^{\ast})\nonumber\\
&  =0.\nonumber
\end{align}
By rearranging above equation, we obtain
\begin{equation}
0=\overline{P}_{tot}(\mathbf{q}^{\ast})-\eta_{EE}^{opt}\overline{R}%
_{tot}(\mathbf{q}^{\ast})\leq\overline{P}_{tot}(\mathbf{q})-\eta_{EE}%
^{opt}\overline{R}_{tot}(\mathbf{q}).
\end{equation}
So that, we have%
\begin{equation}
\eta_{EE}^{opt}=\frac{\overline{P}_{tot}(\mathbf{q}^{\ast})}{\overline
{R}_{tot}(\mathbf{q}^{\ast})},
\end{equation}
and%
\begin{equation}
\eta_{EE}^{opt}\leq\frac{\overline{P}_{tot}(\mathbf{q})}{\overline{R}%
_{tot}(\mathbf{q})}.
\end{equation}
It can be seen that $\mathbf{q}^{\ast}$ is also the solution of
(\ref{EEOptimizationModel}). The sufficiency proof of Theorem \ref{theorem1}
is completed.

Therefore, the proof of Theorem \ref{theorem1} is completed.
\end{proof}

Hence, the original fractional optimization problem (\ref{EEOptimizationModel}) is
transformed to%
\begin{align}
\label{linearoptimizationmode} &  \min\overline{P}_{tot}-\eta_{EE}^{opt}(t)\overline
{R}_{tot}\\
&  \text{s.t. C1, C2, C3}.\nonumber
\end{align}
In the time slot $t$, since $\eta_{EE}^{opt}(t)$ is unknown, we relax the minimization (\ref{linearoptimizationmode}) as
\begin{align}
\label{linearoptimizationmode1} &  \min\overline{P}_{tot}-\eta_{EE}(t)\overline
{R}_{tot}\\
&  \text{s.t. C1, C2, C3}.\nonumber
\end{align}
According to \cite{dinkelbach1967nonlinear, neely2010stochastic}, this model
is a linear and convex optimization model.

\subsubsection{Virtual Queue}

Although the original fractional optimization model is transformed to a linear
and convex one, it is still difficult to directly solve it due to the existence of time-related variables. Besides,
using the traditional heuristic or iterative algorithm easily incur large
computing overhead and delay, which is intolerant in highly dynamic
communication environments. As such, we explore the application of Lyapunov
optimization theory in this study to solve the optimization problem \cite{neely2010stochastic}.
Before that, the primary question is to tackle the time-related inequality
constraint C1. To this end, the virtual queue technology is used to transform the time-related variable
into a problem of queue stability \cite{neely2013dynamic,
neely2006energy}. Specifically, for constraint C1, the virtual queue
$H_{n}(t)$ for the user $n$ is defined as%
\begin{equation}
\label{virtualqueue}H_{n}(t+1)=\max[H_{n}(t)+e_{n}(t),0],
\end{equation}
where $e_{n}(t)=D_{n}(t)-D_{av}$. Based on the defination, we give Theorem \ref{Theorem2} below.

\begin{theorem}
\label{Theorem2} The constraint C1 can be satisfied by guaranteeing that the
virtual queue is mean rate stable.
\end{theorem}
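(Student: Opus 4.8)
The plan is to exploit the standard relationship between a virtual queue's growth rate and the time average of its input, in the spirit of the Lyapunov drift framework of Neely \cite{neely2010stochastic}. The engine of the whole argument is the simple observation that the $\max[\cdot,0]$ floor in the update $H_{n}(t+1)=\max[H_{n}(t)+e_{n}(t),0]$ can only raise the value relative to dropping the floor, so that the slotwise inequality $H_{n}(t+1)\geq H_{n}(t)+e_{n}(t)$ holds for every $t$. Recalling $e_{n}(t)=D_{n}(t)-D_{av}$, this is the only structural property of the virtual queue I will need.

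First I would sum this inequality over the slots $\tau=0,1,\dots,t-1$. The left-hand side telescopes to $H_{n}(t)-H_{n}(0)$, while the right-hand side accumulates to $\sum_{\tau=0}^{t-1}e_{n}(\tau)=\sum_{\tau=0}^{t-1}\big(D_{n}(\tau)-D_{av}\big)$. Dividing through by $t$ and taking expectations, with the convention $H_{n}(0)=0$ (or, more generally, $E[H_{n}(0)]<\infty$), yields
\begin{equation}
\frac{E[H_{n}(t)]}{t}\ \geq\ \frac{1}{t}\sum_{\tau=0}^{t-1}E[D_{n}(\tau)]-D_{av}.
\end{equation}
Then I would pass to the limit $t\rightarrow\infty$. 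By the definition of mean rate stability, $\lim_{t\rightarrow\infty}E[H_{n}(t)]/t=0$, so the left side vanishes, while the first term on the right converges to the time-average expected response time $\overline{D}_{n}$ appearing in C1. This leaves $0\geq\overline{D}_{n}-D_{av}$, i.e. $\overline{D}_{n}\leq D_{av}$, which is exactly constraint C1. Since the user index $n$ was arbitrary, the bound holds for all $n\in\{1,\dots,N_{u}\}$, establishing the theorem.

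The main obstacle I anticipate is not the algebra but the justification of the limiting steps. I must confirm that the initial backlog contributes nothing asymptotically, which is handled by the finiteness of $E[H_{n}(0)]$ together with the division by $t$, and that the Cesàro average of $E[D_{n}(\tau)]$ genuinely converges to the quantity $\overline{D}_{n}$ used in C1 rather than merely a $\limsup$. I would therefore state the mean-rate-stability definition explicitly at the outset and, if any convergence subtlety arises, phrase the conclusion with $\limsup_{t\rightarrow\infty}\frac{1}{t}\sum_{\tau=0}^{t-1}E[D_{n}(\tau)]\leq D_{av}$; since C1 is an upper-bound constraint, this weaker statement already suffices.
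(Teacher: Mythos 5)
Your proposal is correct and follows essentially the same route as the paper's proof: drop the $\max[\cdot,0]$ floor to get the slotwise lower bound, telescope the sum over slots, divide by the horizon, take expectations, and invoke mean rate stability to annihilate the left-hand side, yielding $\overline{D}_{n}\leq D_{av}$. Your added remark about phrasing the conclusion with a $\limsup$ if the Ces\`{a}ro average does not converge is a minor refinement the paper omits, but the substance of the argument is identical.
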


\begin{proof}
The equation (\ref{virtualqueue}) can be recast as
\begin{equation}
H_{n}(t+1)\geq H_{n}(t)+e_{n}(t).
\end{equation}
By summing over $t\in\{0,...K-1\}$ at both sides of above inequality and
rearranging terms, we have
\begin{equation}\label{equation31HnK}
H_{n}(K)\geq H_{n}(0)+\sum_{t=0}^{K-1}e_{n}(t).
\end{equation}
If we divide both sides of (\ref{equation31HnK}) by $K$, and take the value of $K$ going to
infinity, we obtain
\begin{equation}
\underset{K\rightarrow\infty}{\lim}\frac{1}{K}H_{n}(K)\geq
\underset{K\rightarrow\infty}{\lim}\frac{1}{K}H_{n}(0)+\underset{K\rightarrow
\infty}{\lim}\frac{1}{K}\sum_{t=0}^{K-1}e_{n}(t) \label{ExpectionBoth}.%
\end{equation}
Taking an expectation for (\ref{ExpectionBoth}), the equation becomes
\begin{equation}
\underset{K\rightarrow\infty}{\lim}\frac{1}{K}E[H_{n}(K)]\geq
\underset{K\rightarrow\infty}{\lim}\frac{1}{K}E[\sum_{t=0}^{K-1}%
D_{n}(t)]-D_{av}. \label{Inequality33}%
\end{equation}
According to Jeasen's theory, the following inequality can be obtained%
\begin{equation}
0\leq|E\{H_{n}(t)\}|\leq E\{|H_{n}(t)|\}.
\end{equation}
Because $H_{n}(t)$ is mean rate stable, we have%
\begin{equation}
\underset{K\rightarrow\infty}{\lim}\frac{E\{|H_{n}(K)|\}}{K}=0.
\end{equation}

Therefore, due to $H_{n}(K)\geq0$, it is obvious that $|E\{H_{n}%
(t)\}|=E\{H_{n}(t)\}$. Hence, the left side of (\ref{Inequality33}) should be
\begin{equation}
D_{av}\geq\overline{D}_{n},
\end{equation}
where $\overline{D}_{n}=\underset{K\rightarrow\infty}{\lim}\frac{1}{K}E[\sum_{t=0}^{K-1}D_{n}(t)]$.

The proof of Theorem \ref{Theorem2} is completed.
\end{proof}

\subsubsection{Lyapunov Optimization}

In this part, we aim to apply the Lyapunov optimization theory to solve the
optimization problem (\ref{linearoptimizationmode1}). Firstly, we need to define the Lyapunov function as follows.

Let $\mathbf{\Theta}(t)\overset{\bigtriangleup}{=}\mathbf{H}(t)$ denote the combined
queue backlog vector where $\mathbf{\Theta}(t)=(\theta_{1}(t),\theta_{2}(t),\theta_{3}(t),...\theta_{N}(t))$, the quadratic
polynomial of\ Lyapunov function is defined as \cite{neely2010stochastic}%
\begin{equation}
L(\mathbf{\Theta}(t))\overset{\vartriangle}{=}\frac{1}{2}\sum_{n=1}^{N_{u}%
}H_{n}(t)^{2}.%
\end{equation}
After that, the one-slot conditional Lyapunov drift can be obtained as%
\begin{equation}
\label{conditionaldrift}\Delta(\mathbf{\Theta}(t))\overset{\vartriangle
}{=}E\{L(\mathbf{\Theta}(t+1))-L(\mathbf{\Theta}(t))|\mathbf{\Theta}(t)\}.
\end{equation}
Further, we use the drift-plus-penalty to guarantee the stability of the virtual queue and solve the optimization problem.
By the drift-plus-penalty, the problem (\ref{linearoptimizationmode1}) can be solved as
\begin{equation}\label{mindrifypluspenalty}
\min\Delta(\mathbf{\Theta}(t))+VE\{P_{tot}(t)-\eta_{EE}(t)R_{tot}(t)|{\mathbf{\Theta}(t)}\}.
\end{equation}

To solve the minimization problem (\ref{mindrifypluspenalty}), we give Theorem \ref{Theorem3} as follows.

\begin{theorem}
\label{Theorem3} The bound of the drift-plus-penalty can be written as
\begin{equation}%
\begin{split}\label{drift_plus_penalty}
\Delta(\mathbf{\Theta}(t))+V  &  E\{P_{tot}(t)-\eta_{EE}(t)R_{tot}%
(t)|\mathbf{\Theta}(t)\}\leq B\\
&  +\sum_{n=1}^{N_{u}}H_{n}(t)E\{e_{n}(t)|\mathbf{\Theta}(t)\}\\
&  +VE\{P_{tot}(t)-\eta_{EE}(t)R_{tot}(t)|\mathbf{\Theta}%
(t)\},%
\end{split}
\end{equation}

where
\begin{equation}
B\geq\frac{1}{2}\sum_{n=1}^{N_{u}}E\{e_{n}(t)^{2}|\mathbf{\Theta}(t)\}.
\end{equation}
\end{theorem}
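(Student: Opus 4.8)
The plan is to apply the standard drift-bounding technique of Lyapunov optimization. I would begin from the definition of the one-slot conditional drift, $\Delta(\mathbf{\Theta}(t))=E\{L(\mathbf{\Theta}(t+1))-L(\mathbf{\Theta}(t))|\mathbf{\Theta}(t)\}$, and insert the quadratic Lyapunov function $L(\mathbf{\Theta}(t))=\frac{1}{2}\sum_{n=1}^{N_u}H_n(t)^2$, so that bounding the drift reduces to controlling the per-slot increment $\frac{1}{2}\sum_{n=1}^{N_u}E\{H_n(t+1)^2-H_n(t)^2|\mathbf{\Theta}(t)\}$ term by term.

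The crucial algebraic step is to bound each updated queue length using the virtual-queue recursion (\ref{virtualqueue}). Since $\max[x,0]\le|x|$ for any real $x$, squaring $H_n(t+1)=\max[H_n(t)+e_n(t),0]$ gives
\[
H_n(t+1)^2\le\bigl(H_n(t)+e_n(t)\bigr)^2=H_n(t)^2+e_n(t)^2+2H_n(t)e_n(t).
\]
Subtracting $H_n(t)^2$, summing over all $n\in\{1,\dots,N_u\}$, and taking the conditional expectation given $\mathbf{\Theta}(t)$ yields
\[
\Delta(\mathbf{\Theta}(t))\le\tfrac{1}{2}\sum_{n=1}^{N_u}E\{e_n(t)^2|\mathbf{\Theta}(t)\}+\sum_{n=1}^{N_u}H_n(t)E\{e_n(t)|\mathbf{\Theta}(t)\}.
\]
Defining $B$ as a finite upper bound on the first sum, and then adding the penalty term $VE\{P_{tot}(t)-\eta_{EE}(t)R_{tot}(t)|\mathbf{\Theta}(t)\}$ to both sides, reproduces the claimed inequality (\ref{drift_plus_penalty}) exactly.

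I expect the main obstacle to be justifying that $B$ is a genuinely finite constant, that is, that the second moments $E\{e_n(t)^2\}$ are uniformly bounded across time slots. Because $e_n(t)=D_n(t)-D_{av}$, this reduces to showing that the per-slot response times $D_n(t)$ have bounded second moments, which should follow from the boundedness of the arrival and service processes underlying the $2$-D Markov model. The remaining manipulations---the elementary bound $\max[x,0]\le|x|$ and the linearity of conditional expectation---are routine and carry no real difficulty.
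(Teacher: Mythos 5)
Your proposal is correct and follows essentially the same route as the paper: square the virtual-queue update $H_n(t+1)=\max[H_n(t)+e_n(t),0]$ (using $\max[x,0]^2\le x^2$), sum over $n$, take the conditional expectation to bound $\Delta(\mathbf{\Theta}(t))$, then add the penalty term $VE\{P_{tot}(t)-\eta_{EE}(t)R_{tot}(t)|\mathbf{\Theta}(t)\}$ to both sides and absorb the second-moment sum into $B$. Your closing remark about verifying that $B$ is finite is a reasonable point of care that the paper defers to its later boundedness assumptions, but it does not change the argument.
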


\begin{proof}
By squaring both sides of equation (\ref{virtualqueue}) and rearranging terms,
we have
\begin{equation}
\label{virtualqueue1}\frac{1}{2}[H_{n}(t+1)^{2}-H_{n}(t)^{2}]\leq\frac{1}{2}%
e_{n}(t)^{2}+H_{n}(t)e_{n}(t).
\end{equation}
Summing over $n\in\{1,...N_{u}\}$ for (\ref{virtualqueue1}) and taking a conditional expectation, we have%
\begin{equation}%
\begin{split}
\frac{1}{2}\sum_{n=1}^{N_{u}}E&\{H_{n}(t+1)^{2}-H_{n}(t)^{2}|\mathbf{\Theta}(t)\}
  \leq\\
  &\sum_{n=1}^{N_{u}}\frac{1}{2}E\{e_{n}(t)^{2}|\mathbf{\Theta}(t)\}+\sum_{n=1}^{N_{u}}H_{n}(t)E\{e_{n}(t)|\mathbf{\Theta}(t)\}.
\end{split}
\end{equation}
According to (\ref{conditionaldrift}), the above equation equals to%
\begin{equation}
\label{deltatheta}\Delta(\mathbf{\Theta}(t))\leq\frac{1}{2}\sum_{j=1}%
^{J}E\{e_{j}(t)^{2}|\mathbf{\Theta}(t)\}+\sum_{j=1}^{J}H_{j}(t)E\{e_{j}%
(t)|\mathbf{\Theta}(t)\}.
\end{equation}
Adding $VE\{P_{tot}(t)-\eta_{EE}(t)R_{tot}(t)|\mathbf{\Theta}(t)\}$ on both sides of (\ref{deltatheta}), it becomes
\begin{equation}%
\begin{split}
\Delta(\mathbf{\Theta}(t))+VE\{  &  P_{tot}(t)-\eta_{EE}(t)R_{tot}%
(t)|\mathbf{\Theta}(t)\}\leq B\\
&  +VE\{P_{tot}(t)-\eta_{EE}(t)R_{tot}(t)|\mathbf{\Theta}(t)\}\\
&  +\sum_{j=1}^{J}H_{j}(t)E\{e_{j}(t)|\mathbf{\Theta}%
(t)\}\label{right_side_Lypunovdelta}.%
\end{split}
\end{equation}
Therefore, the equation (\ref{drift_plus_penalty}) can be proved, where
\begin{equation}
B\geq\frac{1}{2}\sum_{j=1}^{J}E\{e_{j}(t)^{2}|\mathbf{\Theta}(t)\}.
\end{equation}

The proof of Theorem \ref{Theorem3} is completed.
\end{proof}

Based on above analysis, the optimization problem of
(\ref{linearoptimizationmode1}) can be solved by minimizing the right-hand-side of
inequality (\ref{right_side_Lypunovdelta}) rather than directly minimizing
$\Delta(\mathbf{\Theta}(t))+VE\{P_{tot}(t)-\eta_{EE}(t)R_{tot}%
(t)|\mathbf{\Theta}(t)\}$.

\subsubsection{Performance of Lyapunov Optimization}

In this part, we further conduct a theoretical performance analysis about the
Lyapunov optimization based the solution above.

\textbf{Boundedness Assumptions}

Before the anslysis, we first give general assumptions as follows.
\begin{equation}
\label{Assumption1}E\{R_{tot}(t)\}\in\lbrack R_{\min},R_{\max}],
\end{equation}
\begin{equation}
\label{Assumption2}E\{P_{tot}(t)\}\in\lbrack P_{\min},P_{\max}],
\end{equation}
\begin{equation}
\label{Assumption3}0\leq q(t)\leq1,
\end{equation}
\begin{equation}
\label{Assumption4}E\{P_{tot}^{\ast}(t)\}\leq E\{R_{tot}^{\ast}(t)\}(\eta
_{EE}^{opt}+\delta^{2}),
\end{equation}
\begin{equation}
\label{Assumption5}E\{e_{n}(t)|\mathbf{\Theta}(t)\}=E\{e_{n}(t)\}\leq
\delta^{2}.
\end{equation}
Apparently, the assumptions (\ref{Assumption1})-(\ref{Assumption3}) are
reasonable since the three variables must be limited in specified
ranges. The assumption (\ref{Assumption4}) is reasonable according to
(\ref{equality_star}). For the assumption (\ref{Assumption5}), since $e_{n}(t)$ may take positive or negative value, we thus assume that its
expectation is less than a finite constant $\delta^{2}$.

\textbf{Performance Analysis}

In Theorem \ref{Theorem3}, we successfully prove that the problem
(\ref{linearoptimizationmode1}) is equivalent to minimizing the right-hand-side of
(\ref{drift_plus_penalty}) based on the Lyapunov optimization theory. As such, we further
explore the performance of the right-hand-side minimization. Specifically, assuming
that the optimal caching decision $\mathbf{q}^{\ast}$ is obtained by the
right-hand-side minimization, there are several properties we aim to discuss.

It should be noted that the virtual queue is mean rate stable, since the
inequality (\ref{drift_plus_penalty}) satisfies the basic form of
drift-plus-penalty in\cite{neely2010stochastic}. Therefore, the constraint C1
is satisfied according to Theorem \ref{Theorem2}.

By substituting the boundedness assumptions into (\ref{drift_plus_penalty})
and taking $\delta\rightarrow0$, we have
\begin{equation}%
\begin{split}
\label{inequality53}\Delta(\mathbf{\Theta}(t))+VE\{  &  P_{tot}(t)-\eta
_{EE}(t)R_{tot}(t)|\mathbf{\Theta}(t)\}\leq B\\
&  +VE\{P_{tot}^{\ast}(t)-\eta_{EE}(t)R_{tot}^{\ast}(t)|\mathbf{\Theta}(t)\}.
\end{split}
\end{equation}
Due to $E[E[X_2|X_1]]=E[X_2]$, we then take an expectation on both sides of inequality
(\ref{inequality53}), it thus becomes
\begin{equation}%
\begin{split}
&  E\{L(\mathbf{\Theta}(t+1))\}-E\{L(\mathbf{\Theta}(t))\}+VE\{P_{tot}(t)\\
&  -\eta_{EE}(t)R_{tot}(t)\}\leq B+VE\{P_{tot}^{\ast}(t)-\eta_{EE}%
(t)R_{tot}^{\ast}(t)\}.
\end{split}
\end{equation}
Summing over from $t\in\{0,...K-1\}$, we get
\begin{equation}%
\begin{split}
\label{inequaltiy54}E\{L(\mathbf{\Theta}(K))\}-  &  E\{L(\mathbf{\Theta
}(0))\}+V\sum_{t=0}^{K-1}E\{P_{tot}(t)\\
&  -\eta_{EE}(t)R_{tot}(t)\}\leq KB+KVE\{P_{tot}^{\ast}(t)\}\\
&  -VE\{R_{tot}^{\ast}(t)\}\sum_{t=0}^{K-1}\eta_{EE}(t).
\end{split}
\end{equation}
Plugging assumption (\ref{Assumption4}) into (\ref{inequaltiy54}), we then
divide (\ref{inequaltiy54}) by $VK$ and take $K\rightarrow\infty$, it becomes%
\begin{equation}%
\begin{split}
\label{inequality55}&\underset{K\rightarrow\infty}{\lim}\frac{1}{K}%
\sum_{t=0}^{K-1}E\{P_{tot}(t)-\eta_{EE}(t)R_{tot}(t)\}\leq\frac{B}{V}\\
&  +\eta_{EE}^{opt}E\{R_{tot}^{\ast}(t)\}-E\{R_{tot}^{\ast}%
(t)\}\underset{K\rightarrow\infty}{\lim}\frac{1}{K}E\{\sum_{t=0}^{K-1}%
\eta_{EE}(t)\},
\end{split}
\end{equation}
where
\begin{equation}
\underset{K\rightarrow\infty}{\lim}\frac{1}{K}
\sum_{t=0}^{K-1}E\{L(\mathbf{\Theta}(K))\}-E\{L(\mathbf{\Theta}(0))\}=0.
\end{equation}
According to the Lebesgue dominated convergence theorem, we obtain
\begin{equation}
\underset{K\rightarrow\infty}{\lim}\frac{1}{K}E\{\sum_{t=0}^{K-1}\eta
_{EE}(t)\}=\eta_{EE},
\end{equation}
where we assume $\underset{K\rightarrow\infty}{\lim}\frac{1}{K}\sum_{t=0}^{K-1}\eta
_{EE}(t)$ is convergent.
Rearranging (\ref{inequality55}) we have%
\begin{equation}
\frac{B}{V}+\eta_{EE}^{opt}E\{R_{tot}^{\ast}(t)\}\geq E\{R_{tot}^{\ast
}(t)\}\eta_{EE}.%
\end{equation}

Finally, we obtain the upper bound of $\eta_{EE}$ as%
\begin{equation}\label{equation59}
\eta_{EE}\leq\frac{B}{VE\{R_{tot}^{\ast}(t)\}}+\eta_{EE}^{opt}.%
\end{equation}
From (\ref{equation59}) we can see, $\eta_{EE}$ will approach to $\eta_{EE}^{opt}$
by increasing the parameter $V$.

From the analysis above, the original optimization problem (\ref{EEOptimizationModel}) now can be solved by minimizing the right-hand-side of (\ref{drift_plus_penalty}). The minimization of the right-hand-side of (\ref{drift_plus_penalty}) equals to
\begin{align}\label{FinalOptimization}
\min\sum_{n=1}^{N_{u}}H_{n}(t)D_{n}(t)+&VE\{P_{tot}(t)-\eta_{EE}(t)R_{tot}(t)|\mathbf{\Theta}(t)\}\nonumber\\
& \text{s.t. C2, C3}.
\end{align}

The new optimization model (\ref{FinalOptimization}) obtains a trade-off between the minimization of energy efficiency and the stability of the virtual queue. In this case, the large $V$ will achieve a good energy efficiency performance with the cost of the performance degradation of virtual queue stability. Therefore, to select an appropriate value for $V$ that balances the performance of energy efficiency and virtual queue stability is critical.

\subsection{Online Caching}

Based on the conclusion above, we develop an online caching scheme in Algorithm \ref{Algorithm 1}. Specifically, at each time
slot, the cache decision for next time slot is determined in MBS by optimizing the network energy efficiency.
Once the caching decision is received by a caching vehicle, it will
compare the decision from MBS with their own caching data and in case of similarity, it does not need to update the current
cache, otherwise, it
should renew the cache based on the received decision from MBS. There are two methods to update caching data based
on CCNs, $i.e.,$
caching vehicles can get the updating data based on the naming formation from nearby vehicles or from MBS.
In the next time slot, mobile users will enjoy the services under the
cooperation of caching vehicles and MBS. Algorithm \ref{Algorithm 1} shows the detailed decision-making process. Firstly, using
the variables $H_n(t)$ and $\eta_{EE}(t)$, the decision $\mathbf{q}^{\ast}(t+1)$ can be calculated by (\ref{FinalOptimization})
and then be broadcasted to caching vehicles. Secondly, $H_n(t)$ and $\eta_{EE}(t+1)$ are updated according to their renewal
processes, respectively. Finally, MBS switches to next time slot and prepare to a new process.

\begin{algorithm}[!t]
  \caption{Online Caching Algorithm.}
  \renewcommand{\algorithmicrequire}{\textbf{Input:}}
  \renewcommand{\algorithmicensure}{\textbf{Output:}}
  \label{Algorithm 1}
  \begin{algorithmic}[1]
    \Require
      $H_n(t)$, $\eta_{EE}(t)$;
    \Ensure
      $H_n(t+1)$, $\eta_{EE}(t+1)$, $\mathbf{q}^{\ast}(t+1)$;
    \State At the time slot $t$, get the current variables and obtain the optimal caching decision $\mathbf{q}^{\ast}(t+1)$ by solving
    (\ref{FinalOptimization});
    \label{step1}
    \State Update $H_n(t+1)$ according to (\ref{virtualqueue});
    \label{step2}
    \State Update $\eta_{EE}(t+1)$ according to;
        \begin{align*}
          {\eta_{EE}(t)} &  =\frac{\sum_{t=0}^{K-1}P_{tot}(\mathbf{q}^{\ast}(t+1))}{\sum_{t\rightarrow0}^{K-1}R_{tot}(\mathbf{q}^{\ast}(t+1))},
        \end{align*}
    \label{step3}
    \State $t=t+1$;
    \label{step4}
  \end{algorithmic}
\end{algorithm}
\begin{remark}
In the online caching scheme, the decision at next time slot is made by the current network state. However, due
to the highly dynamic network environments and the uncertainty of mobile users, the hit ratio of caching
vehicles may be influenced with different slot lengths. In Section \ref{Section VI}, we do not
evaluate the impact of different slot lengths on the performance of the online vehicular caching. This is because that the
problem above can be easily solved by existing predication algorithms, such as deep learning
\cite{jia2016fusing}. The prediction based caching decision is also our next work.
\end{remark}

\section{Performance Evaluation}\label{Section VI}

In this section, the performance of our proposal is evaluated by extensive simulations on Matlab.

\subsection{Simulation Settings}

We consider a simple but practical urban simulation scenario. In the cell
served by a MBS with a disk of radius $R=350$ meters, we select a road segment that has four-lane bidirectional traffic flow.
The system bandwidth is set as 10 MHz, which is a typical setting in D2D/LTE related researches,
as shown in \cite{liang2017resource, 3rdgeneration2016project, zhang2018social}.
The performance under the joint service of caching vehicles
and MBS is simulated. Specifically, to simulate the behavior of vehicles, we use the real-world mobility traces of taxi
cabs, collected from the GPS coordinates of approximately 320 taxis over 30
days in Rome, Italy \cite{roma-taxi-20140717}. According to the data set, we
extrapolate the statistics of vehicle distribution, which then is
assumed as the Poisson distribution in our simulation. We model the spatial
distribution of mobile users as homogeneous Poisson point process
(PPP)\cite{liu2016energy}. The probability that there are $n$ mobile users in
a given region thus can be obtained.
The parameter $V$, which is used to control the trade-off between queue stability and energy efficiency, has been
widely studied in existing literature \cite{neely2010stochastic, neely2013dynamic, neely2006energy}. We therefore opt for
$V=50$ that guarantees the convergence according to their results, in order to fully focus on evaluating the performance with
the variations of request arrival rate, cache proportion and cache capacity.
Other parameters in our simulations are shown in TABLE \ref{TABLE1}.

\begin{table}[!t] 
\caption{Simulation Parameters}
\centering
\label{TABLE1}
\begin{tabular}{p{1.4cm}|p{1cm}|p{1.4cm}|p{1cm}} 
\hline
\hline
Parameters & Value & Parameters & Value \\ 

\hline 
       \emph{$P_{v,k}^t$} &  $23$dBm   & \emph{$\sigma^{2}$}   &$-110$dBm  \\
\hline
       \emph{$P_{B}^t$} &  $46$dBm      & \emph{$W$} & $10$MHz \\
\hline
       \emph{$\zeta_{v}$} & $15.13$   & \emph{$\phi$} & $0.7$ \\
\hline
       \emph{$B$} &  $1000$    & \emph{$F_{n}$ } & $10$Mb \\
\hline
\hline
\end{tabular}
\end{table}

\subsection{Simulation Results}

Fig. \ref{eevslmada}
shows the relationship between energy efficiency and request arrival rate. In this experiment, we assume that the
normalized caching capacity is 0.01, $i.e.$, each caching vehicle caches at most $1\%$ of total data due to the storage limitation. Besides,
the cache proportion is set to $0.5$, which means that half of vehicles act as cache carriers. Fig. \ref{eevslmada}
compares the energy efficiency ($\eta_{EE}$) of online
vehicular caching (Online VC) with that of offline vehicular caching (Offline VC) and no vehicular caching (No VC). The lower $\eta_{EE}$ means that the lower energy is
consumed by data transmissions, which demonstrates the better performance in energy consumption. With different $\lambda$, the results of No VC remain at a steady level. The
Offline VC means that the cache decision is updated at a
relatively long time interval, $e.g.$, one day, as illustrated in \cite{Vigneri2017Quality}. This approach may
reduce the backhaul energy consumption but the real-time and optimal hit ratio cannot be obtained. For Offline VC, the energy
efficiency increases gradually when $\lambda$ is small. When $\lambda\geq0.6$, the $\eta_{EE}$ of Offline VC also
remains at a steady level. For Online VC, the $\eta_{EE}$ is apparently increased with the increase of $\lambda$
though the increment gradually decreases.
The large $\lambda$ also makes the $\eta_{EE}$ of Online VC steady gradually. This is because that too many requests of mobile
users exceed the service capacity
of caching vehicles. In this case, much more users will be served by MBS. Therefore, the strength of energy
performance incurred by caching vehicles is gradually reduced. Averagely, the energy efficiency of Online VC improves about
$35.8$\%
compared with No VC, and about $12.63$\% compared with Offline VC. The reason that Online VC outperforms Offline VC is that
Online VC
covers the changing needs of mobile users
by making caching decisions in a small slot time based on the newly proposed algorithm.

\begin{figure}[tb]
\centering
\includegraphics[width=3.35in]{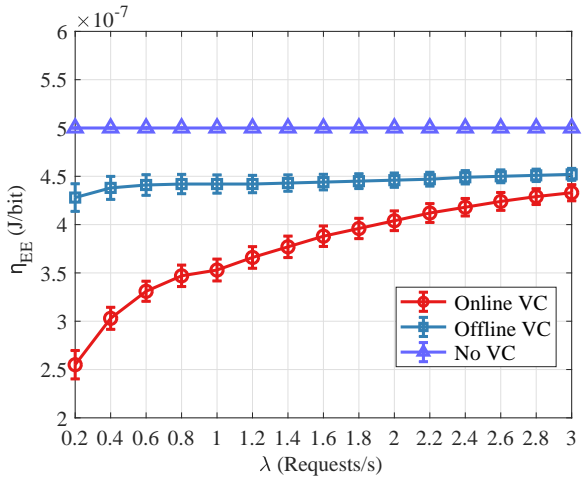}\
\caption{Comparison for the energy efficiency of three schemes}%
\label{eevslmada}%
\end{figure}

\begin{figure}[!t]
\centering
\includegraphics[width=3.5in]{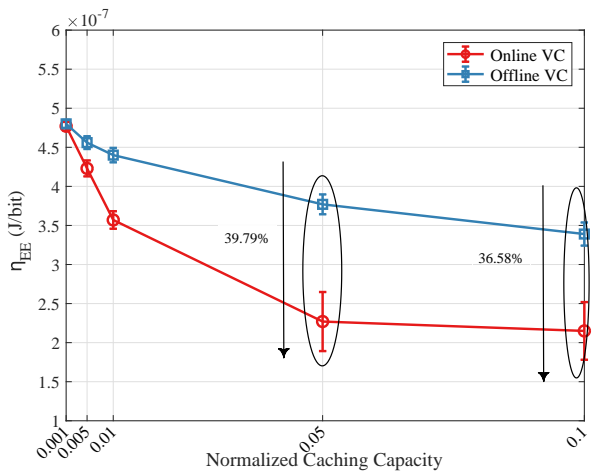}
\caption{Impact of normalized caching capacity on energy efficiency}%
\label{eevseta}%
\end{figure}

\begin{figure}[!t]
\centering
\includegraphics[width=3.4in]{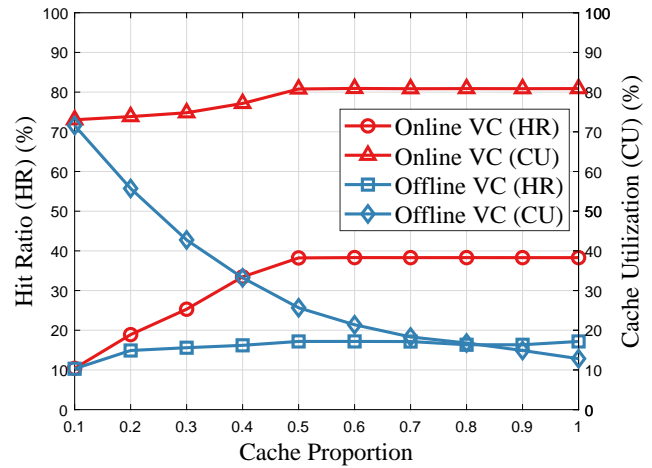}
\caption{Impact of cache proportion on hit ratio and cache utilization}%
\label{hrvscp2}%
\end{figure}

Fig. \ref{eevseta}
shows the energy efficiency performance with the variations of normalized caching capacity. In this experiment, the cache
size of each caching vehicle, referred as normalized caching capacity, is set in the range of $0.1-10\%$ of the total data catalogue. Besides, $\lambda$ is set to $1$ and the cache proportion
is set to $0.5$. The comparison of $\eta_{EE}$ between Online VC and Offline VC is conducted. In Fig. \ref{eevseta}
, it can
be seen that Online VC always outperforms Offline VC in different normalized caching capacity, which verifies the efficiency of the
newly proposed algorithm. The gap between these two schemes reaches a maximum value when the normalized cache capacity is
$0.05$, $i.e.$, Online VC improves about $37.79\%$ compared with Offline VC. Specifically, the gap between two schemes is small
when the caching capacity is small. This is because that only a small part of mobile users can be served by caching
vehicles, resulting in the minor gain. With the increase of cache capacity,
the gap between two schemes increases, which is because more users are served by caching vehicles. In this case, the advantage
of our proposal is obvious.
When the cache capacity reaches $0.05$, the gap is gradually reduced and tends to be steady. This is because that
the vehicular
cache reaches a saturated condition, $i.e.$, the cache capacity of caching vehicles is enough to provide caching services to
mobile users. In this case, the increase of cache capacity will result in extra cache energy on both schemes.

\begin{figure}[!t]
\centering
\includegraphics[width=3.4in]{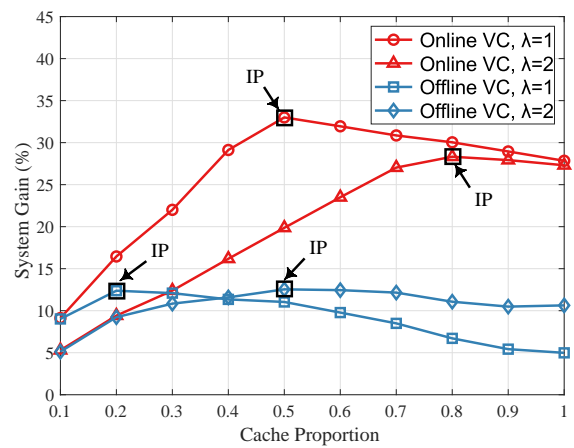}
\caption{Comparison of system gain in different scenarios}%
\label{suvscp}%
\end{figure}

Fig. \ref{hrvscp2}
is to explore the impact of cache proportion on hit ratio and cache utilization. In this
experiment, we still set $\lambda=1$, and assume the normalized cache capacity is $1\%$. The hit ratio refers to
the proportion of users served by caching vehicles when they send requests to vehicles. If the requests cannot be responded by caching vehicles in tolerance time, mobile users will switch to communicate with MBS. As such, the higher hit ratio demonstrates that the vehicular caching has more powerful service capability. The cache utilization denotes the utilization of mobile data
cached by vehicles, \emph{i.e.}, the proportion that mobile data in caching vehicles is accessed by mobile users. The ultimate
value of cache utilization ($100\%$) means that the mobile data updated in caching vehicles one times can serve mobile users all the time. In this case, there is no need to update vehicular cache, which will significantly reduce the network overhead.
Specifically, when the cache proportion is small, there is a rising tendency in hit ratio
for both online and offline caching schemes. The Online VC firstly achieves a major increase and then remains a
steady level, which always outperforms the Offline VC. Averagely, the Online VC improves about $10.4\%$ in hit ratio
compared with the Offline VC. However, when the hit ratio reaches a steady state, to increase cache
proportion will incur extra backhaul energy and caching energy, resulting in the performance degradation of
energy efficiency. This conclusion can be a guidance to determine the cache proportion that
achieves the optimal energy efficiency. On the other hand, with the increase of cache proportion, the cache utilization
obtained by Online VC always remains a relatively stable range, $i.e.$, $[73\%,81\%]$. This stability is
achieved by the real-time optimization based on the newly proposed algorithm. However, Offline VC, which
updates the cache content at a long time interval, obtains a declining cache utilization with the increase of
cache proportion. For cache utilization, Online VC achieves at most $80.63\%$ performance improvement compared with the Offline
VC.

Fig. \ref{suvscp}
presents the system gain caused by the two caching schemes with the variations of cache
proportion. The system gain represents the extra system throughput produced by the vehicular cache, whose increase
is because that multiple times services may be performed by a caching vehicle after caching data. Actually, the reason behind
the increase of system gain is correspondence with cache utilization in Fig. \ref{hrvscp2}. In
this experiment, the normalized cache capacity is also $1\%$. We opt for two values for $\lambda$, $i.e.$,
$\lambda=1$ or $\lambda=2$, to make a comparison. It can be seen that four curves in Fig. \ref{suvscp}
have
different Inflection Points (IPs). The existence of IP is because that when the system gain reaches a maximum value,
the service of vehicular caching is saturated, which means the increase of cache proportion no longer
contributes to the improvement of system throughput. Specifically, when $\lambda=1$, the IP is reached by Online VC at the point where the
cache proportion is $0.5$, and by Offline VC when the cache proportion is $0.2$. These results show that caching vehicles play a stronger role in Online VC compared with Offline VC. Therefore, the Online VC makes the network accommodate more users compared with Offline VC, accordingly resulting in more system gain.
In Fig. \ref{suvscp}
, the maximum system gain is about $33\%$, obtained by Online VC. The similar conclusion can
also be obtained under $\lambda=2$ while the maximum system gain is about $28\%$. Besides, the figure shows
that the Online VC achieves a better system gain compared with Offline VC in a given cache proportion, which also verifies the effectiveness of our proposal.

\section{Conclusion}\label{Section VII}

This paper focuses on enabling the efficient and reliable vehicular caching in cellular networks.
Specifically, we first develop a $2$-D Markov process to model
the communications of caching vehicles and mobile users. The probability that mobile users served by caching vehicles or MBS can be calculated.
Further, The D2D communication technology is
used to evaluate the network throughput in our scenario. By incorporating the D2D communications and a series of energy
consumption models, the caching decision
problem then be formulated as a fractional optimization model, targeting on the optimization of energy efficiency. To the best
of our knowledge, this is the
first study that takes the energy efficiency as an optimization goal in vehicular caching, which is because that the energy
management is a promising and
crucial problem with the popularization of electric vehicles in the future. We then use the nonlinear fractional programming
technology and Lyapunov optimization
theory to explore the solution of the above optimization problem. The problem thus can be transformed into a linear and convex
one. Based on the solution, we develop an online caching algorithm for vehicular caching scheme. This, to the best of our
knowledge, is also the first literature to apply the
Lyapunov in the research of vehicular caching, which can provide an important reference to other related studies. By
extensive simulations based on Matlab, the performance of the online vehicular caching is evaluated in
terms of energy efficiency, hit ratio, cache utilization and system gain. The comparison results with other schemes also verify the effectiveness of our proposal.

\appendices\label{APPENDIXA}
\section{Proof for the Derivation of $E[L_{0}]$ and $E[L_{1}]$}


For simplicity, we assume $M=\xi P_{00}+vP_{11}$. The equation (\ref{EquG0}) becomes%
\begin{equation}
G_{0}^{\prime}(z)+\frac{(-\lambda z+\lambda+\xi)}{\omega z-\omega}%
G_{0}(z)=\frac{M}{\omega z-\omega}. \label{EquG00}%
\end{equation}
Assuming that $f(x)=\frac{-\lambda z+\lambda+\xi}{\omega z-\omega}$, we have
\begin{equation}\label{}
  e^{\int f(x)dx}=e^{-\frac{\lambda}{\omega}z+\frac{\xi}{\omega}\ln(z-1)}=e^{-\frac{\lambda}{\omega}z}(z-1)^{\frac{\xi}{\omega}}.
\end{equation}
Dividing the terms of
(\ref{EquG00}) by $e^{-\frac{\lambda}{\omega}z}(z-1)^{\frac{\xi}{\omega}}$, we obtain
\begin{equation}
\begin{split}
G_{0}^{\prime}(z)e^{-\frac{\lambda}{\omega}z}(z-1)^{\frac{\xi}{\omega}%
}+&e^{-\frac{\lambda}{\omega}z}(z-1)^{\frac{\xi}{\omega}}\frac{(-\lambda
z+\lambda+\xi)}{\omega z-\omega}G_{0}(z)\\
&=\frac{M}{\omega z-\omega}e^{-\frac{\lambda}{\omega}z}(z-1)^{\frac{\xi}{\omega}}.
\end{split}
\end{equation}
Then we obtain
\begin{equation}
\frac{d}{dz}[G_{0}(z)e^{-\frac{\lambda}{\omega}z}(z-1)^{\frac{\xi}{\omega}%
}]=\frac{M}{\omega z-\omega}e^{-\frac{\lambda}{\omega}z}(z-1)^{\frac{\xi
}{\omega}}.
\end{equation}
Integrating from $0$ to $z$, we get%
\begin{equation}
\begin{split}
G_{0}(z)e^{-\frac{\lambda}{\omega}z}(z-1)^{\frac{\xi}{\omega}}&-G_{0}%
(0)=\\
&\frac{M}{\omega}\int_{0}^{z}e^{-\frac{\lambda}{\omega}x}(x-1)^{\frac{\xi
}{\omega}-1}dx.
\end{split}
\end{equation}
Further, we have%
\begin{equation}
\begin{split}
G_{0}(1)&=\lim_{z\longrightarrow1}G_{0}(z)\\
=&\lim_{z\longrightarrow1}\frac
{G_{0}(0)+\frac{M}{\omega}\int_{0}^{z}e^{-\frac{\lambda}{\omega}x}%
(x-1)^{\frac{\xi}{\omega}-1}dx}{e^{-\frac{\lambda}{\omega}z}(z-1)^{\frac
{\xi}{\omega}}}.%
\end{split}
\end{equation}
Due to $G_{0}(1)$ is a infinite value, we obtain that
\begin{equation}\label{}
  G_{0}(0)=-\frac{M}{\omega}\int_{0}^{z}e^{-\frac{\lambda}{\omega}x}(x-1)^{\frac{\xi}{\omega}-1}dx.
\end{equation}
For simplicity, we rewrite the equation as $G_{0}(0)=-\frac{M}{\omega}K$.
So that, $G_{0}(1)$ is obtained as
\begin{align}
G_{0}(1)  &  =\lim_{z\longrightarrow1}\frac{\frac{M}{\omega}K+\frac{M}{\omega
}\int_{0}^{z}e^{-\frac{\lambda}{\omega}x}(x-1)^{\frac{\xi}{\omega}-1}%
dx}{e^{-\frac{\lambda}{\omega}z}(z-1)^{\frac{\xi}{\omega}}}\\
&  =\lim_{z\longrightarrow1}\frac{\frac{M}{\omega}K(1-\frac{\int_{0}%
^{z}e^{-\frac{\lambda}{\omega}x}(x-1)^{\frac{\xi}{\omega}-1}dx}{\int_{0}%
^{1}e^{-\frac{\lambda}{\omega}x}(x-1)^{\frac{\xi}{\omega}-1}dx})}%
{e^{-\frac{\lambda}{\omega}z}(z-1)^{\frac{\xi}{\omega}}}.\nonumber%
\end{align}
By L'Hopital rule, we have%
\begin{align}
G_{0}(1)  &  =\frac{\frac{M}{\omega}K}{\frac{\xi}{\omega}\int_{0}%
^{1}e^{-\frac{\lambda}{\omega}x}(x-1)^{\frac{\xi}{\omega}-1}dx}\\
&  =\frac{\frac{M}{\omega}K}{\frac{\xi}{\omega}K}=\frac{M}{\xi}.\nonumber%
\end{align}
According to the equation of $G_{1}(z)$ and L'Hopital rule, we have%
\begin{align}
G_{1}(1)  &  =\lim_{z\longrightarrow1}\frac{\xi zG_{0}(z)-z\nu P_{11}-\xi
zP_{00}}{\lambda z+\nu z-\nu-\lambda z^{2}}\\
&  =\frac{\xi G_{0}(1)+\xi G_{0}^{\prime}(1)-\nu P_{11}-\xi P_{00}}%
{\nu-\lambda}.\nonumber%
\end{align}
Therefore, we have%
\begin{equation}
E[L_{0}]=G_{0}^{\prime}(1)=\frac{\nu-\lambda}{\xi}\sum_{n=1}^{\infty}P_{1n}.%
\end{equation}
According to $G_{0}^{\prime}(z)$, we obtian
\begin{equation}
E[L_{0}]=\lim_{z\longrightarrow1}G_{0}^{\prime}(z)=\frac{\lambda G_{0}(1)-\xi
G_{0}^{\prime}(1)}{\omega}.%
\end{equation}Due to $\sum_{n=1}^{\infty}P_{1n}+\sum_{n=0}^{\infty}P_{0n}=1$, we obtain
\begin{equation}
E[L_{0}]=\frac{\lambda\nu-\lambda^{2}}{\nu\omega+\nu\xi-\lambda\omega}.%
\end{equation}
Similarly, according to $G_{1}(z)$, we have
\begin{align}
E[L_{1}]  &  =\lim_{z\longrightarrow1}G_{1}^{\prime}(z)\\
&  =\lim_{z\longrightarrow1}\frac{\xi zG_{0}(z)-z\nu P_{11}-\xi zP_{00}%
}{\lambda z+\nu z-\nu-\lambda z^{2}}.\nonumber%
\end{align}
By L'Hopital rule, it becomes
\begin{align}
E[L_{1}]  &  =\frac{\xi G_{0}(1)+\xi G_{0}^{\prime}(1)-\nu P_{11}-\xi
P_{00}}{\nu-\lambda}\\
&  =\frac{\xi(\lambda\nu-\lambda^{2})}{(\nu\omega+\nu\xi-\lambda\omega
)(\nu-\lambda)}\\
&  =\frac{\xi\lambda}{\nu\omega+\nu\xi-\lambda\omega}.\nonumber%
\end{align}
The derivation for $E[L_{0}]$ and $E[L_{1}]$ is proved.


%
%
%

\begin{IEEEbiography}[{\includegraphics[width=1.4in,height=1.2in,clip,keepaspectratio]{authoryao_zhang.eps}}]
{Yao Zhang}received the B.Eng. degree in Telecommunication Engineering from Xi¡¯an University of Science and Technology, China, in 2015, and is currently pursuing the Ph.D. degree in Telecommunication Engineering, at Xidian University, Xi¡¯an, China. His current research interests include communication protocol and performance
evaluation of vehicular networks, edge caching, and wireless sensor networks.
\end{IEEEbiography}
\vspace{-4.2em}
\begin{IEEEbiography}[{\includegraphics[width=1in,height=1.2in,clip,keepaspectratio]{authorchangle_li.eps}}]
{Changle Li} received the Ph.D. degree in communication and information system from Xidian University, China, in 2005. He conducted his postdoctoral research in Canada and the National Institute of information and Communications Technology, Japan, respectively. He had been a Visiting Scholar with the University of Technology Sydney and is currently a Professor with the State Key Laboratory of Integrated Services Networks, Xidian University. His research interests include intelligent transportation systems, vehicular networks, mobile ad hoc networks, and wireless sensor networks.
\end{IEEEbiography}
\vspace{-4.2em}
\begin{IEEEbiography}[{\includegraphics[width=1in,height=1.2in,clip,keepaspectratio]{authortom_h_luan.eps}}]
{Tom H. Luan} received his B.Eng. degree from Xi'an Jiao Tong University, China, in 2004, the M.Phil. degree from Hong Kong University of Science and Technology in 2007, and Ph.D. degree from the University of Waterloo, Ontario, Canada, in 2012. He is a professor at the School of Cyber Engineering of Xidian University, Xi'an, China. His research mainly focuses on content distribution and media streaming in vehicular ad hoc networks and peer-to-peer networking, as well as the protocol design and performance evaluation of wireless cloud computing and edge computing. Dr. Luan has authored/coauthored more than 40 journal papers and 30 technical papers in conference proceedings, and awarded one US patent. He served as a TPC member for IEEE Globecom, ICC, PIMRC and the technical reviewer for multiple IEEE Transactions including TMC, TPDS, TVT, TWC and ITS.
\end{IEEEbiography}
\vspace{-4.2em}
\begin{IEEEbiography}[{\includegraphics[width=1in,height=1.33in,clip,keepaspectratio]{authoryuchuan_fu.ps}}]
{Yuchuan Fu}received the B.Eng. degree in Telecommunication Engineering from Xi'an University of Posts \& Telecommunications, China, in 2014, and is currently pursuing the Ph.D. degree in Telecommunication Engineering, at Xidian University, Xi'an, China. Her current research interests include communication protocol and algorithm design in vehicular networks and wireless sensor networks.
\end{IEEEbiography}
\vspace{-3.2em}
\begin{IEEEbiography}[{\includegraphics[width=1.5in,height=1.33in,clip,keepaspectratio]{authorweisong_shi.eps}}]
{Weisong Shi} received the BS degree from Xidian University, in 1995, and the PhD degree from the Chinese Academy of Sciences, in 2000, both in computer engineering. He is a
Charles H. Gershenson distinguished faculty fellow and a professor of computer science with Wayne State University. His research interests include edge computing, computer
systems, energy-efficiency, and wireless health. He is a recipient of the National Outstanding PhD dissertation award of China and the NSF CAREER award. He is a fellow of the
IEEE and ACM distinguished scientist.
\end{IEEEbiography}
\vspace{-4em}
\begin{IEEEbiography}[{\includegraphics[width=1in,height=1.32in,clip,keepaspectratio]{authorlina_zhu.eps}}]
{Lina Zhu} received her B.E. degree from Suzhou University of Science and Technology, China, in 2009, and Ph.D. degrees in Communication and Information System, from Xidian University, China, in 2015. She is currently a lecturer in State Key Laboratory of Integrated Services Networks at Xidian University, China. Her current research interests include mobility model,  trust forwarding, routing and MAC protocols in vehicular networks.
\end{IEEEbiography}

\end{document}